\DeclareMathOperator{\E}{\mathbb{E}}
\DeclareMathOperator{\Pb}{\mathbb{P}}
\newcommand{\vb}{\vspace{3mm}}
\newcommand{\f}[2]{\frac{#1}{#2}}
\theoremstyle{plain}
\newtheorem{theorem}{Theorem}
\newtheorem{proposition}{Proposition}
\theoremstyle{definition}
\theoremstyle{definition}
\theoremstyle{definition}
\newtheorem{algorithm}{Algorithm}
\theoremstyle{definition}
\newtheorem{example}{Example}
\theoremstyle{definition}
\definecolor {processblue}{cmyk}{1,1,1,1}
\tikzset{state/.style ={circle,draw,black,text=black,minimum width =1 cm}}
\tikzset{box/.style   ={rectangle, draw,black,text=black,minimum width =1 cm}}
\tikzstyle{phase} = [fill,shape=circle,minimum size=5pt,inner sep=0pt]
  \pgfplotsset{compat=newest}
\title{Predicting the confirmation time of Bitcoin transactions}
\author{D.T. Koops}
\affil{Korteweg-de Vries Institute, University of Amsterdam}
\date{\today}
\begin{document}
\maketitle 

\begin{abstract}
\noindent
We study the probabilistic distribution of the confirmation time of Bitcoin transactions, conditional on the current memory pool (i.e., the queue of transactions awaiting confirmation). The results of this paper are particularly interesting for users that want to make a Bitcoin transaction during `heavy-traffic situations', when the transaction demand exceeds the block capacity. In such situations, Bitcoin users tend to bid up the transaction fees, in order to gain priority over other users that pay a lower fee. We argue that the time until a Bitcoin transaction is confirmed can be modelled as a particular stochastic fluid queueing process (to be precise: a Cram\'er-Lundberg process). We approximate the queueing process in two different ways. The first approach leads to a lower bound on the confirmation probability, which becomes increasingly tight as traffic decreases. The second approach relies on a diffusion approximation with a continuity correction, which becomes increasingly accurate as traffic intensifies. The accuracy of the approximations under different traffic loads are evaluated in a simulation study.
\end{abstract}

\section{Introduction}
Bitcoin is a \textit{virtual currency}, in the sense that it can be used to transact or store value in a peer-to-peer network on the internet. The main innovation of Bitcoin is that it can operate without central authority, as opposed to traditional money accounts that are run by banks. The Bitcoin network has been running since 2009; which is when the `genesis block' was mined by an unknown person who refers to himself as Satoshi Nakamoto. Since then the popularity of Bitcoin has increased substantially. What became painfully obvious during December 2017 - January 2018, when the price of Bitcoin reached new all-time highs, is that the network is not capable of dealing with all transactions for low fees when the traffic intensity on the blockchain increases too much. During this period, Bitcoin users were paying as much as 50 US dollar for a simple transaction. The reason that users were paying such hefty fees, is that the network was clogged, and transactions with a higher fee get priority. It is clear that from a user perspective, it would be useful to have precise estimates of the confirmation time during such circumstances. This is the problem that is addressed in this paper.

\vb

Let us consider in some more detail how the Bitcoin blockchain works. Bitcoin relies on a protocol called \textit{proof-of-work}. This protocal is basically a voting mechanism that decides which transactions should be included in the {blockchain}, and it acts as a replacement of a trusted third party. A blockchain is a chain of blocks, each of which contains a bunch of transactions that are considered \textit{verified} or \textit{confirmed}. Newly found blocks can be linked to the last block in the blockchain (in principle blocks can also be linked to earlier blocks, but then they will not be accepted by other nodes as the longest chain is seen as valid). Computers with strong computational power are competing to solve a cryptographic puzzle as quickly as possible, in order to get the right to determine the next block. Computers that perform this task are called \textit{miners}, and they receive \textit{mining rewards} for being the first to mine a new block. The reference to mining comes from the fact that these miners have to search large outcome spaces, and they need to get lucky to find a block. In that sense it is somewhat comparable to e.g.\ gold mining, where a large area is searched and occasionally some gold is found. It should be noted that all this computational effort is spent purely as a voting procedure, in the sense that solving the puzzle allows one to submit a vote on what should be the next block (although the vote need not be accepted by the majority of other nodes, e.g.\ if it contains invalid transactions). Another, possibly more obvious, voting mechanism would be to randomly select a computer or IP address to propose the next block. The problem with this idea is twofold. Firstly, there should be some central party that determines who gets selected, which creates a single point of failure which is exactly what Bitcoin tries to avoid. Secondly, one can `fake' having many computers or cheaply create many IP-addresses, whereas it is impossible to solve the cryptographic puzzles without doing the work. The above explanation should clarify the idea behind, and the name of, the proof-of-work protocol.

In addition to the mining reward, the miner receives the transaction fees corresponding to all transactions in the block. When users want to make a Bitcoin transaction, they send the details of the transaction, along with a fee, to a node in the Bitcoin network. This node will spread the transaction to other nodes. When a transaction is known within the network, we say that it is inside the \textit{mempool} (short for memory pool). Miners are free to put any transactions from the mempool they want into the new block, but they obviously have the incentive to put the transactions in the block with the highest \textit{fee density} (i.e., highest fee per unit of transaction data, often stated in \textit{Satoshi per byte}, where `Satoshi' refers to $10^{-8}$ Bitcoin). The size of a transaction depends on the number of incoming transactions (inputs) that are linked to outgoing addresses (outputs). This implies that the size of a transaction is not fixed, and hence the number of transactions that fit in a block varies.

The blocks that are added to the chain have a maximum size of 1 Megabyte (MB): this is hardcoded in the Bitcoin software and ensures that the network remains stable. The difficulty of the cryptographic puzzle is periodically adjusted in such a way that a block is mined on average every 10 minutes (assuming that the so called \textit{hash rate}, i.e. computational power, of the network remains fixed). This leads to a transaction capacity of about 7 transactions per second (depending on the size of the transactions). The time until a block is found is approximately exponentially distributed \cite{btcfee}, which can be reasoned as follows. Miners search a very large search space at random. Each trial is independent of the previous, and each trial has an equal probability of success. Therefore, the number of trials until success has a geometric distribution. Since the success probability is very small and each trial takes in principle equally long, the geometric distribution can be approximated very closely by its continuous counterpart: the exponential distribution. The miner who solves the puzzle first, will take a time that is the minimum of all exponential times of all miners, which is again exponentially distributed. However, with data ranging over long enough time-spans, the hypothesis that blocks arrive after an exponential time can be falsified \cite{Taylor}. This can be explained by the fact that the hash rate of the whole network typically changes significantly in between difficulty adjustments.

\vb

During times of congestion, a Bitcoin user that wants to make a transaction is interested in the relation between the fee and confirmation time characteristics. The most obvious characteristics are the expected or `worst-case' confirmation times (which means that the transaction will be confirmed with, say, $95\%$ probability at that time). These quantities can be easily derived from the confirmation time distribution, which is the focus in this paper.

\subsection{Current literature and algorithms}
In the \textit{Bitcoin Core} 0.15 release there is a function called \textit{estimatesmartfee}. The input to this function is the maximum number of blocks after which you want your transaction to be included, and the output is the estimated fee that achieves this with a 95\% probability. The estimation is purely data driven. The procedure keeps track of the number of transactions in a particular fee density bucket that made it into into the blockchain after a given number of blocks. Exponential smoothing is applied (with smoothing factor $0.998$), to give higher weight to more recent transactions. For more details, see \cite{btcintro}. This algorithm will provide a reasonable benchmark, but one may expect improvements if the \textit{current} mempool is also taken into account. Moreover, a purely data-driven approach only extrapolates from past data and is therefore lagging behind if the environment changes quickly. A model-based approach will be more flexible in adjusting to current circumstances.

\vb

Alternatively, there is an estimator based on simulation, see \cite{btcfeeest1} and \cite{btcfeeest2}. This algorithm is model based, and assumes arrivals and batch services at Poisson epochs. The transaction size (in bytes) and transaction fee are sampled from a joint distribution based on historic blocks. This algorithm has been backtested on historical data, and appears to performs well. A disadvantage is that simulations tend to be time consuming. In our approach we try to apply mathematical analysis in order to gain more insight and to avoid computationally burdensome simulations.

\vb

The paper \cite{btcfee} provides an analytical probabilistic modeling approach for confirmation times, where the transaction process is seen as a $M/M^N/1$ queue with priorities and preemption. The stationary time until confirmation is analyzed. This provides understanding of the inner working of the Bitcoin blockchain on a system level, but it does not uncover the dynamics from a user perspective. For example, due to stationarity assumptions, the paper does not take into account information such as the current mempool and it is ignored that transactions have different sizes. Our paper aims to include these features, as they play an important role for predicting confirmation times.

\subsection{Our approach and contributions}
There are two sources of uncertainty of when a transaction will be confirmed: future block arrivals and future transaction arrivals. Since the average number of transactions per block is of the order $\sim 2000$, there will also be of the order $2000$ arrivals per batch departure in heavy-traffic situations. On such a scale, the main source of randomness is the block arrival process. Therefore, it is natural to model the amount of data with a higher fee as a Cram\'er-Lundberg process; this is a process well known from insurance mathematics and introduced in Section \ref{sec:btcmodel} below. In such models, the process slopes upwards with a fixed rate and there are jumps down at Poisson epochs. Typically, in the literature, it is assumed that the jumps down are phase type (e.g.\ exponential or Erlang), in which case relatively explicit results can be obtained with regards to the time until confirmation. However, in the case of Bitcoin, the jumps will be of size at most 1 (MB), since that is the max capacity of a block. We assume that all blocks will be completely filled whenever possible, which coincides with the miner's best interests. Cram\'er-Lundberg models with deterministic jumps down did not receive much attention in the literature, as results tend to be much more involved. One can appeal to general results involving spectrally negative L\'evy processes, of which our model is a special case, or expressions for the busy period in an $M/D/1$ queue starting in a level $x$ (cf.\ \cite{Kyprianou} or \cite{Pakes}). In either case, those results involve a double Laplace transform with respect to initial workload level and time, of which the inversion is computationally involved and potentially numerically unstable. Therefore we aim to derive more explicit and simple results. Firstly, we derive a lower bound on the probability that the transaction will be confirmed within a specified number of blocks. This approximation turns out to be close to simulated values when the traffic is relatively low. However, in heavy-traffic situations, the lower bound is very loose. Therefore, we also develop a second approach, based on a `corrected' diffusion approximation. 

In the diffusion approximation approach, we replace the Cram\'er-Lundberg process by a Brownian motion with the same drift, variance and starting point. This considerably simplifies the analysis for the confirmation time, as it is known that the hitting time of a Brownian motion with a drift towards the boundary has an inverse Gaussian distribution. Subsequently, we note that the approximation can be improved substantially by doing a `continuity correction'. This correction entails that the Brownian motion does not start in the same starting point as the Cram\'er-Lundberg process, but it is adjusted by the expected undershoot (i.e., the expectation of the value that process hits whenever it crosses zero from above for the first time). A computational procedure to calculate the expected undershoot is provided. Subsequently, the improvement as a result of the continuity correction is assessed by simulations.

\subsection{Outline}
In Section \ref{sec:btcmodel} we introduce our notation and the Cram\'er-Lundberg model, and we state the mathematical objective. In Section \ref{sec:btcresults} we describe the main theory: in Section \ref{sec:btclowerbound} we compute a lower bound to the probability of confirmation within a certain number of blocks. This bound is relatively tight as traffic is low, but its performance is poor in heavy traffic. Subsequently, in Section \ref{sec:btcdiffusion approximation} we consider a diffusion approximation which is particularly suitable for heavy-traffic regimes, but also works remarkably well in moderate-traffic situations. It is described how the approximation is corrected using the expected undershoot. Finally, in Section \ref{sec:btcsimulation}, the relative performance of the (corrected) Brownian approximations and the (simulated) Cram\'er-Lundberg model is considered. 

\section{Model}
\label{sec:btcmodel}
Given a transaction with a particular fee, let $X(t)$ denote the position of that transaction in the mempool at time $t$, if it is ordered by fee density. We suppose that $X(t)$ satisfies a Cram\'er-Lundberg type of model, i.e.:
\[
X(t) = x_0 + ct - P(t),
\]
where the process is scaled such that blocks have size $1$ and the time is scaled such that the Poisson process $P(t)$ has interarrival times of mean $1$, $x_0$ is the initial position in the mempool upon entering, and $c\geq0$ is the rate at which transaction data with higher priority arrives. Once the process hits zero, it means that the transaction under consideration has been confirmed. Thus $\tau=\inf\{t\geq0: X(t)=0\}$ means that the transaction has been confirmed after $\tau$ time units. The main mathematical objective in this paper is to recover the distribution of $\tau$. To ensure that $\Pb(\tau<\infty)=1$, we assume that $\E X(1)<0$. This is not restrictive, as users want to have their transactions confirmed with probability one at some point in the future, which is achieved by paying a sufficiently high fee.  

\begin{figure}[ht!]
\centering
\includegraphics[width=0.9\linewidth]{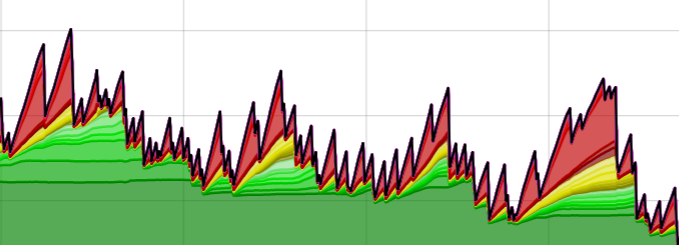}

\vspace{0.5cm}

\includegraphics[width=\linewidth]{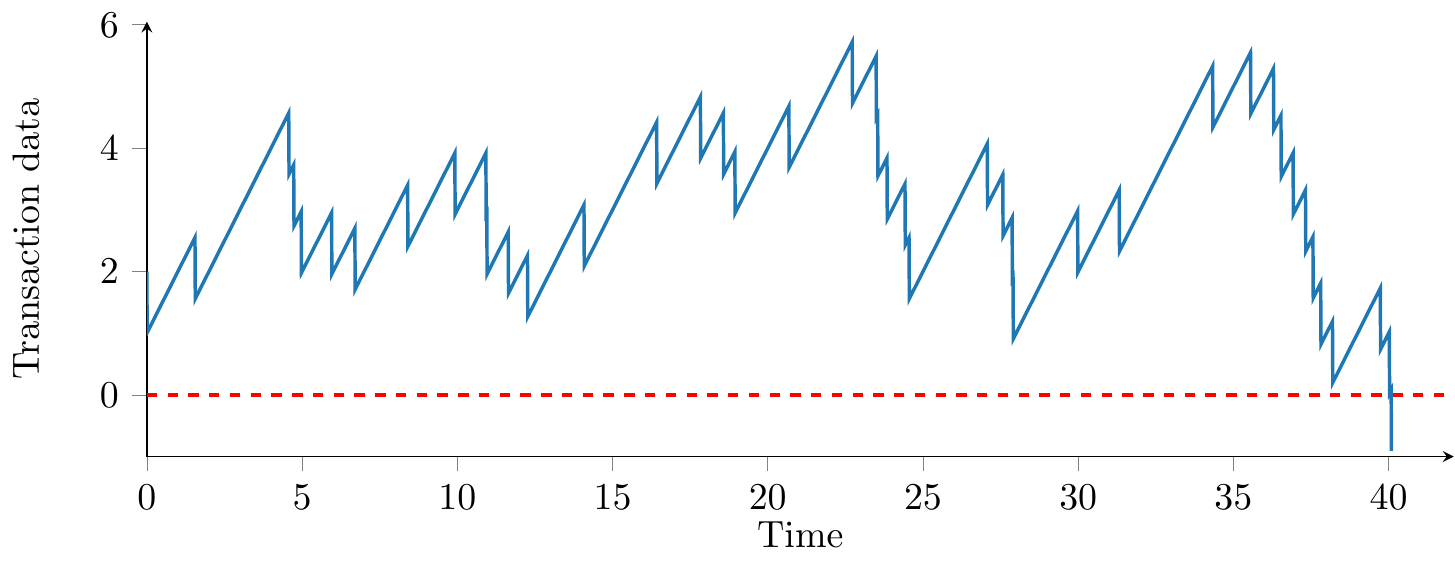}
\caption{Top: actual mempool data over time, as retrieved from \cite{hoenicke}, during a 16 hour timeframe in January 2018. Bottom: a sample path of the Cram\'er-Lundberg model.}
\label{fig:btcmempoolcomparison}
\end{figure}

A sample path of the $X(\cdot)$ process is shown in Figure \ref{fig:btcmempoolcomparison}, which is compared to a snapshot of true mempool data. Notice the similarity between how these two processes behave. It appears that the slope in the true mempool process is approximately constant in this time window, corresponding to our modeling assumption.

\section{Mathematical results}
\label{sec:btcresults}
In this section, we first derive a lower bound, which will be tight in light-traffic situations (as will be demonstrated in Section \ref{sec:btcsimulation}). Subsequently, we describe the diffusion approximation approach, which involves deriving an algorithm that can determine the expected overshoot of the Cram\'er-Lundberg model with deterministic jumps.

\subsection{A lower bound}
\label{sec:btclowerbound}
Let $T_1, T_2,\ldots$ be a set of i.i.d., exponentially distributed, random variables with unit mean, that represent the time until the next block is found. Define $X_i = X(T_i)$. Then it holds that
\[
X_{n+1} = X_n+c T_n-1.
\]
For now, we assume that there is no sticky boundary at zero: the process can become negative and positive again. This means that, in the model, the transaction could switch from confirmed to unconfirmed. Therefore the actual probability of confirmation will be larger than the probability that the $X(\cdot)$ process is negative, i.e. we find a lower bound. Consider the following proposition.

\begin{proposition}
\label{prop:btclowerbound}
Suppose that given a fee density $\phi$, one can determine the rate $c=c(\phi)$ at which transactions with a fee density higher than $\phi$ arrive and one can observe the initial place in the mempool $x_0=x_0(\phi)$, which is ordered by fee density. Then the probability of confirmation in at most $n$ blocks, is bounded by
\[
\Pb(\text{confirmation in $\leq n$ blocks})\geq 1-\sum_{k=0}^{n-1} \f{y_n^{k}}{k!}e^{-y_n},\quad\text{with}\quad y_n:=\max\{\f{n-x_0}{c},0\}.
\]
\end{proposition}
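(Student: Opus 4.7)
The plan is to decouple the first-passage question from a simpler endpoint question about $X_n := X(T_n)$, the value of the process immediately after the $n$-th block arrival. First I would establish the path-space inclusion $\{X_n\leq 0\}\subseteq\{\tau\leq T_n\}$: because $X(\cdot)$ increases linearly at rate $c$ between jumps and starts at $x_0>0$, any sample path that is non-positive at $T_n$ must have passed through the level zero at some earlier time --- either exactly at a downward jump, or continuously during the ascent that follows a jump which overshot. This immediately yields the lower bound $\Pb(\tau\leq T_n)\geq \Pb(X_n\leq 0)$.

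Second, I would evaluate $\Pb(X_n\leq 0)$ in closed form. Iterating the recursion $X_{k+1}=X_k+cT_k-1$ gives
\[
X_n \;=\; x_0 + cS_n - n, \qquad S_n:=T_1+\cdots+T_n,
\]
so that the event becomes $\{S_n\leq (n-x_0)/c\}$. Since the $T_i$ are i.i.d.\ Exp$(1)$, the sum $S_n$ has a Gamma$(n,1)$ (Erlang) distribution, whose c.d.f.\ admits the classical Erlang--Poisson expression
\[
\Pb(S_n\leq y)\;=\;1-\sum_{k=0}^{n-1}\f{y^k}{k!}e^{-y},\qquad y\geq 0,
\]
equivalent to $\Pb(\mathrm{Poisson}(y)\geq n)$. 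Evaluating this at $y=(n-x_0)/c$ produces the stated bound whenever $n\geq x_0$; when $n<x_0$ the event $\{X_n\leq 0\}$ is empty and the bound should read $0$, which is precisely what the truncation $y_n=\max\{(n-x_0)/c,0\}$ (together with the convention $0^0=1$) returns. Both regimes are therefore captured by the single expression in the proposition.

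The main obstacle --- indeed the only non-routine part --- is the path-space inclusion in the first step, since it is the only place where the discrete-time sampling $X_n=X(T_n)$ is tied back to the continuous first-passage time $\tau$. The reverse inclusion fails in general: the process may cross zero, climb back above it, and thus have $\tau<T_n$ while $X_n>0$. This lack of reversibility is exactly the source of slack in the bound, and explains why tightness is expected only in the light-traffic regime, where excursions above zero after the first passage are rare.
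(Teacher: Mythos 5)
Your proposal is correct and follows essentially the same route as the paper: bound the confirmation probability below by $\Pb(X_n\leq 0)$ via the observation that non-positivity at the $n$-th block implies earlier passage of level zero, then evaluate $\Pb(X_n\leq 0)$ through the Erlang (Gamma$(n,1)$) distribution of the sum of the exponential interarrival times, with the $\max\{\cdot,0\}$ truncation covering the case $n\leq x_0$. The only difference is cosmetic: you spell out the path-space inclusion and the edge case slightly more explicitly than the paper does (modulo a small notational slip where the $n$-th block time should be $T_1+\cdots+T_n$ rather than $T_n$).
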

\begin{proof}
The sum of $n$ independent exponentially distributed random variables is Erlang distributed. In particular, with $y_n(x):=\frac{n-x_0+x}{c}$, it holds that
\begin{align*}
\Pb(X_n>x) &= \Pb(x_0-n+c \sum_{i=1}^n T_i >x) = \Pb\left(\sum_{i=1}^n T_i > \f{x+n-x_0}{c}\right)\\
&=
\begin{cases}
1&\text{if } \frac{x+n-x_0}{c}\leq 0\\
\sum_{k=0}^{n-1} \f{y_n(x)^{k}}{k!}e^{-y_n(x)} & \text{if } \frac{x+n-x_0}{c}>0.
\end{cases}
\end{align*}
Since $X_n\leq 0$ implies ``$\textrm{confirmation in }\leq n \textrm{ blocks}$'', it follows that
\begin{align*}
\Pb(\textrm{confirmation in }\leq n \textrm{ blocks}) &\geq \Pb(X_n\leq0) = 1- \Pb(X_n>0)\\
&=
\begin{cases}
0 & \text{if } n \leq x_0\\
1-\sum_{k=0}^{n-1} \f{y_n(0)^{k}}{k!}e^{-y_n(0)} & \text{if } n > x_0,
\end{cases}
\end{align*}
which yields the result.
\end{proof}


This gives us a lower bound which can be calculated very efficiently. The simulations in Section \ref{sec:btcsimulation} reveal that the bound is quite tight when traffic is light (say, $c\leq 0.5$), but the bound becomes poor in heavy-traffic situations.

\subsection{Diffusion approximation}
\label{sec:btcdiffusion approximation}
To simplify the analysis, we could replace the Cram\'er-Lundberg process by a Brownian motion with a matching drift and variance. Due to our time and space scaling, it is easy to see that the variance of the Cram\'er-Lundberg process equals unity. It is well known that the hitting time of a Brownian motion with a drift towards a constant boundary has an inverse Gaussian distribution. In certain cases, however, this approximation will perform rather poorly. For instance, if the starting position $x_0=0$ and the slope $c=0$, then the hitting time is simply the realization of the first exponential interarrival time of the Poisson process. However, due to its properties, the Brownian motion will immediately go below the $0$ boundary with probability one. We can `correct' for this, by noting that the undershoot of the Cram\'er-Lundberg process below level $0$ will be of size $1$ in this case, and by letting the Brownian motion start in level $1$. For general $c>0$, the expected undershoot can still be calculated via the theorem and algorithm stated below. The idea of `correcting' a diffusion approximation of a random walk by changing the initial point (and the drift) of the approximating Brownian motion is not new, cf.\ e.g.\ \cite{siegmund}. We contribute to this line of research by calculating the expected overshoot explicitly for this particular Cram\'er-Lundberg model. In the following theorem and its proof, we write $n_x$ to denote $\lceil x\rceil$, i.e.\ the smallest integer greater than or equal to $x$.

\begin{theorem}
\label{thm:undershoot}
The undershoot $S_x$ below level $0$ of the Cram\'er-Lundberg process that starts in level $x$, with $P(t)$ a Poisson process with rate $1$, satisfies, for $0\leq x\leq 1$,
\begin{equation}
\label{eq:Sx01}
\E S_x= 1-c+e^{-\f{1-x}{c}}(c+\E S_1) - x,
\end{equation}
and for $x>1$, 
\begin{equation}
\label{eq:Sx1}
\E S_x = \int_0^{\f{n_x-x}c} \E S_{x+cy-1} e^{-y}\dif y + e^{-\f{n_x-x}c}\E S_{n_x}.
\end{equation}
\end{theorem}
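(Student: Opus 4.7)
The plan is to derive both equations by conditioning on the time $T_1$ of the first Poisson jump, which is $\text{Exp}(1)$-distributed, and then exploiting the lack-of-memory property: on the event that $T_1$ exceeds a deterministic level $t$, the shifted process $\{X(t+s)-X(t)\}_{s\ge 0}$ is independent of $\mathcal{F}_t$ and is itself a Cram\'er-Lundberg process, so the conditional expected undershoot regenerates as $\E S_{X(t)}$.

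For \eqref{eq:Sx01}, I would set $a=(1-x)/c$, the deterministic time at which the upward drift would carry the process from $x$ to the critical level $1$. On $\{T_1\le a\}$ the process jumps from $x+cT_1\in[x,1]$ down to $x+cT_1-1\le 0$, so the undershoot has been realised on the first jump and equals $1-x-cT_1$. On $\{T_1>a\}$ the process reaches level $1$ without jumping; by memorylessness, the future is a fresh Cram\'er-Lundberg process started at $1$, whose expected undershoot is $\E S_1$. Integrating against the $\text{Exp}(1)$ density yields
\[
\E S_x=\int_0^{a}(1-x-cy)e^{-y}\dif y + e^{-a}\E S_1.
\]
The remainder is a routine evaluation: using $\int_0^a y e^{-y}\dif y = 1-(a+1)e^{-a}$ one gets $\int_0^a(1-x-cy)e^{-y}\dif y = (1-x)-c+ce^{-a}$, which rearranges to the closed form \eqref{eq:Sx01}.

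For \eqref{eq:Sx1}, the same strategy applies with threshold $a'=(n_x-x)/c$, the time needed to reach the next integer $n_x=\lceil x\rceil$. On $\{T_1\le a'\}$ the post-jump position is $x+cT_1-1\in[x-1,n_x-1]$; since $x>1$, this is strictly positive, so no undershoot has yet occurred, and by the strong Markov property the conditional expected undershoot is $\E S_{x+cT_1-1}$. On $\{T_1>a'\}$ the process reaches $n_x$ un-jumped, and by lack of memory the conditional expected undershoot is $\E S_{n_x}$. Integrating against the $\text{Exp}(1)$ density delivers \eqref{eq:Sx1} with no further simplification required.

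There is no deep obstacle; the argument is a standard first-jump decomposition for a Poisson-driven process. The one judgement call is the choice of splitting thresholds: level $1$ in the first case (the boundary below which the next jump causes an undershoot) and the integer $n_x$ in the second case. The latter choice is what makes \eqref{eq:Sx1} algorithmically useful, because the unknowns inside the integral are $\E S_{x'}$ for $x'\in[x-1,n_x-1]\subset[0,x)$ together with the single boundary unknown $\E S_{n_x}$ at an integer level, allowing one to compute $\E S_x$ iteratively level by level after combining with \eqref{eq:Sx01}.
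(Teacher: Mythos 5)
Your proposal is correct and follows essentially the same route as the paper: a first-jump decomposition using the memorylessness of the exponential inter-jump time, with the threshold at level $1$ (giving the explicit undershoot $1-x-cT_1$ on an early jump) for $0\leq x\leq 1$, and at the integer level $n_x$ (regenerating as $\E S_{x+cT_1-1}$ via the Markov property) for $x>1$. The algebraic evaluation you outline matches the paper's computation, so there is nothing to add.
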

The proof of this theorem can be found after the algorithm below. Note that Algorithm 1 is ideally implemented in a programming language that supports symbolic mathematics, such as \textsc{Mathematica}. 

\begin{algorithm}
\label{alg:sx}
\textsc{Initialize}: Set $c$ and $n_{\textrm{max}}$ (comment: $n_{\textrm{max}}$ needs to be large enough so that $\E S_{n_\textrm{max}}\approx \lim_{x\to\infty} \E S_x$ and the $x$ in value of interest $\E S_x$ satisfies  $x<n_{\textrm{max}}$)\\
\textsc{Step} 1: 
\begin{itemize}
\item Set $n=1$. While $n\leq n_{\textrm{max}}$: express $f_n(x) = \int_{0}^{\f{n-x}c} e^{-y} f_{n-1}(x+yc-1)\dif y + e^{-\f{n-x}c} a_n$  (comment: $a_n$ is a constant that is yet to be determined)
\end{itemize}
\textsc{Step} 2: Determine $a_n$
\begin{itemize}
\item Set $n=1$. While $n\leq n_{\textrm{max}}-1$: Solve $a_n = f_{n+1}(n)$ for $a_{n+1}$ and increment $n\leftarrow n+1$ (comment: this step expresses $a_n$ into $a_{n-1}$, into $a_{n-2}$, $\ldots$, and finally into $a_0$)
\item Set $a_{n_{\textrm{max}}-1}=a_{n_{\textrm{max}}}$ and solve for $a_0$ (comment: since each $a_n$ was expressed into $a_0$, we have thus determined each $a_n$)
\end{itemize}
\textsc{Output}: $\E S_{x}$ is given by $f_{n_x}(x)$ 
\end{algorithm}


\begin{proof}[Proof of Theorem \ref{thm:undershoot} and Algorithm \ref{alg:sx}]
By the memorylessness property of the exponential distribution, we have for $0\leq x \leq 1$
\begin{align*}
\E S_x &= \E[S_x | c T_1<1-x]\Pb(c T_1<1-x) + \E[S_1|c T_1\geq1-x] \Pb(cT_1\geq1-x)\\
&=\E[S_x 1_{\{c T_1<1-x\}}]+ e^{-(1-x)/c}\E[S_1]\\
&=\E[(1-c T_1-x) 1_{\{c T_1<1-x\}}]+ e^{-(1-x)/c}\E[S_1]\\
&=(1-x)\Pb(c T_1<1-x)-\E[c T_1 1_{\{c T_1<1-x\}}]+ e^{-(1-x)}\E[S_1]\\
&=(1-x)\Pb(c T_1<1-x)-c \int_0^{\f{1-x}{c}} y e^{-y} \dif y+ e^{-\f{1-x}{c}}\E[S_1]\\
&=(1-x)(1-e^{-\f{1-x}{c}}) - c - (x-1-c)e^{-\f{1-x}{c}}+ e^{-\f{1-x}{c}}\E[S_1]\\
&=1-c+e^{-\f{1-x}{c}}(c+\E S_1) - x,
\end{align*}
so in particular
\[
\E S_0 = 1-c+e^{-1/c}(c+\E S_1).
\]
With $n_x=\lceil x\rceil$ and using a similar argument as above, we have for $n_x-1 < x \leq n_x$,
\begin{align*}
\E S_x &= \E[S_x | cT<n_x-x]\Pb(cT<n_x-x) + \E[S_{n_x}|cT\geq n_x-x] \Pb(cT\geq n_x-x)\\
&=\E[S_x | cT<n_x-x]\Pb(cT<n_x-x) + e^{-\f{n_x-x}c} \E S_{n_x}\\
&=\E[S_x 1_{\{cT<n_x-x\}}] + e^{-\f{n_x-x}c} \E S_{n_x}\\
&=\int_{0}^{\f{n_x-x}c}\E S_{x+c y-1} e^{-y}\dif y + e^{-\f{n_x-x}c} \E S_{n_x}.
\end{align*}
This finishes the proof of the proposition. The algorithm is a straightforward application of the proof. 
\end{proof}

\begin{figure}[ht!]
\centering
\includegraphics[width=0.49\textwidth]{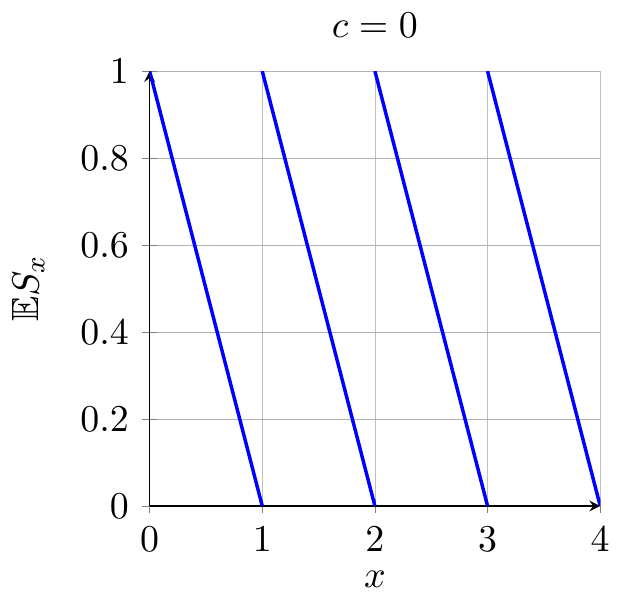}
\includegraphics[width=0.49\textwidth]{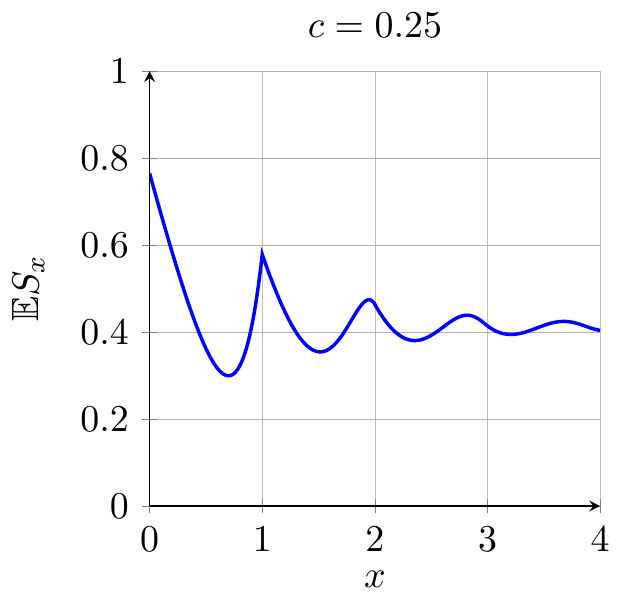}

\vspace{0.25cm}

\includegraphics[width=0.49\textwidth]{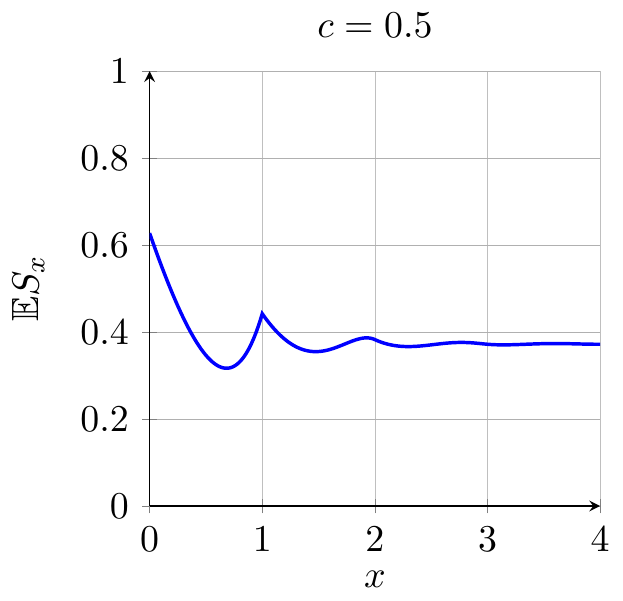}
\includegraphics[width=0.49\textwidth]{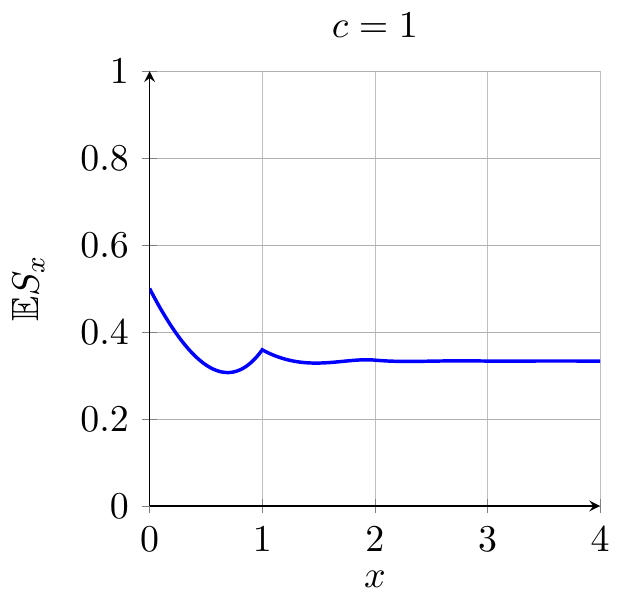}
\caption{The expected undershoot $\E S_x$ as a function of the initial position $x$. Note that as traffic becomes more intense, the undershoot is less variable as function of the starting position, and converges faster to its limiting value.}
\label{fig:btcexpectedundershoot}
\end{figure}

Using a slightly different approach than Algorithm \ref{alg:sx}, it is even possible to derive exact error bounds on the expected undershoot after $n$ iterations. This can be done by using the expression of $\E S_{n-1}$ in terms of $\E S_n$, for all $n$. 

\begin{example}
Using the theory above in the special case $c=1$, it can be recursively calculated that
\begin{align*}
\E S_0 &= \f1e + \f1e \E S_1\approx 0.368 + \f1e \E S_1 \\
\E S_0 &= \f2{e(e-1)} + \f1{e(e-1)} \E S_1\approx 0.428 + \f1{e(e-1)} \E S_2\\
\E S_0 &= \frac{4}{(e-1) (1+2 (e-2) e)} + \frac{2}{e-4 e^2+2 e^3} \E S_3\\
&\approx 0.4746 + \frac{2}{e-4 e^2+2 e^3} \E S_3\\
\E S_0 &= \frac{e (15+e)-24}{(e-1) e (1+e (5+2 (e-3) e))} + \frac{(e-1)^2 \E S_4}{e^3 (1+e (5+2 (e-3) e))}\\
&\approx 0.496 + 0.014 \E S_4 .
\end{align*}
Since clearly $0\leq \E S_n \leq 1$, for every $n$, we know that the error is at most the prefactor of $\E S_{n+1}$. Therefore, after four iterations, it follows that
\[
\E S_0 \in [0.496,0.510].
\] 
This corresponds to the value $0.500\pm 0.001$ (where the value after $\pm$ indicates the standard error) that we found by simulation. 
\end{example}

An implementation of Algorithm \ref{alg:sx} in \textsc{Mathematica} leads to Figure \ref{fig:btcexpectedundershoot}, where $c$ is varied ranging from $c=0$ to $c=1$. In the particular case $c=1$ (which implies zero drift and is known as `no safety loading' in the insurance literature), the limiting density of the undershoot can be explicitly calculated, as the initial position tends to infinity. Indeed, with $Y$ the random variable corresponding to jump sizes, \cite[Theorem 6]{undershoot} states that the limiting density of the undershoot in case of zero drift and finite variance jump sizes satisfies
\[
f(y) = \frac{\int_0^\infty z \Pb(Y>z+y)\dif z}{\int_0^\infty z \Pb(Y>z)\dif z}.
\]
In \cite{bscmeeles} it is derived under the same assumptions that
\[
\Pb\left( \lim_{x\to\infty} S_x>y\right) = \frac{\E[\max\{0,(Y-y)^2\}]}{\E Y^2},
\]
which in case deterministic jumps of unit size simplifies to
\[
\Pb\left( \lim_{x\to\infty} S_x>y\right) = (1-x)^2,\quad\text{for } 0\leq x\leq 1.
\]
It follows that 
\[
\lim_{x\to\infty} \E[S_x] = \f13,
\]
which is indeed the value to which $\E S_x$ appears to converge, for $x$ large, in the plot of Figure \ref{fig:btcexpectedundershoot} corresponding to the case of $c=1$. 

\section{Simulation}
\label{sec:btcsimulation}
\begin{figure}[ht!]
\centering
\includegraphics[width=0.49\textwidth]{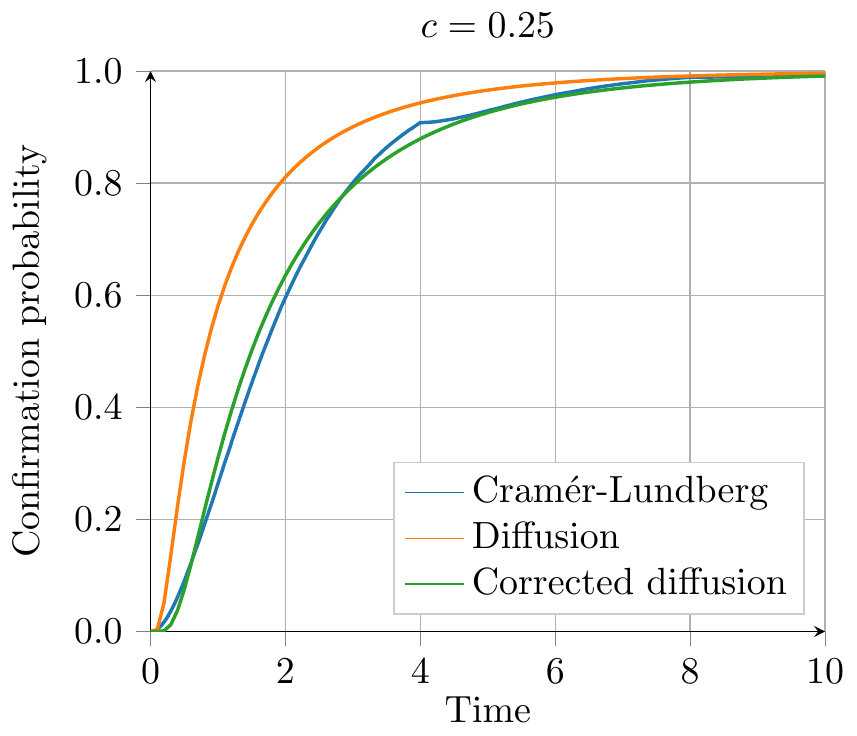}
\includegraphics[width=0.49\textwidth]{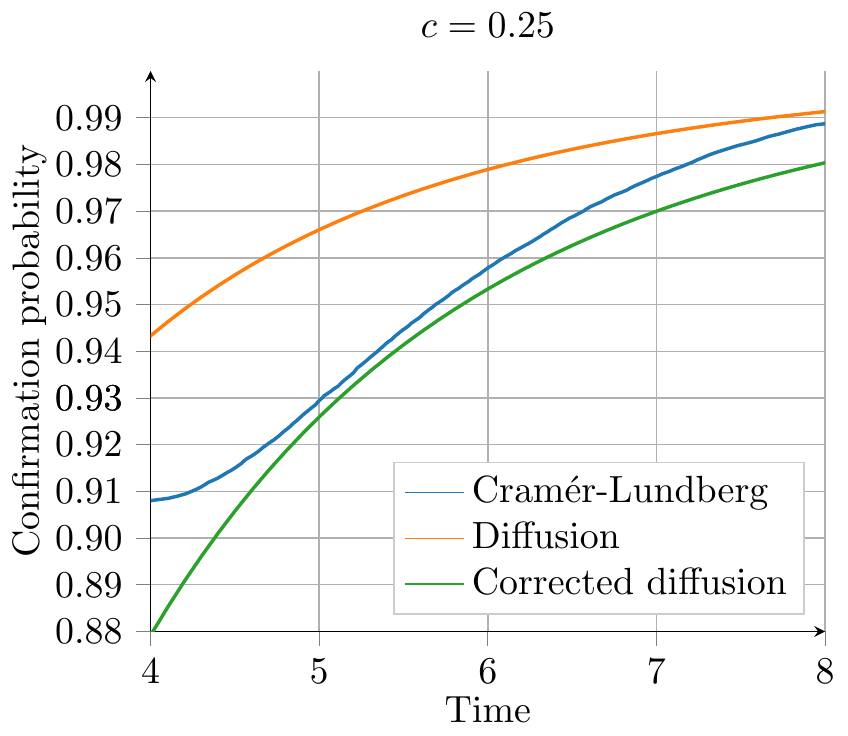}
\includegraphics[width=0.49\textwidth]{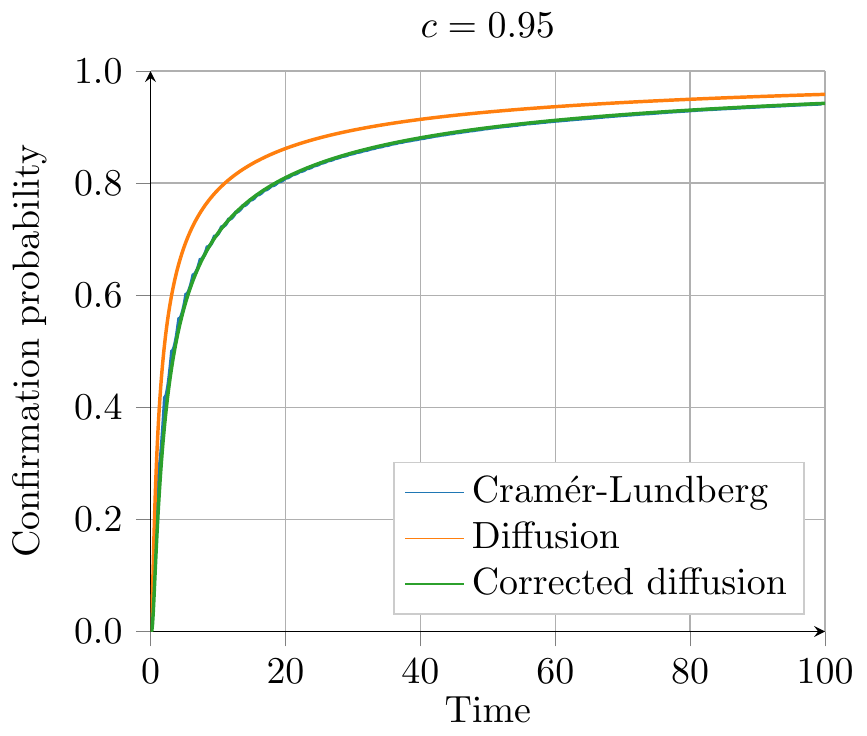}
\includegraphics[width=0.49\textwidth]{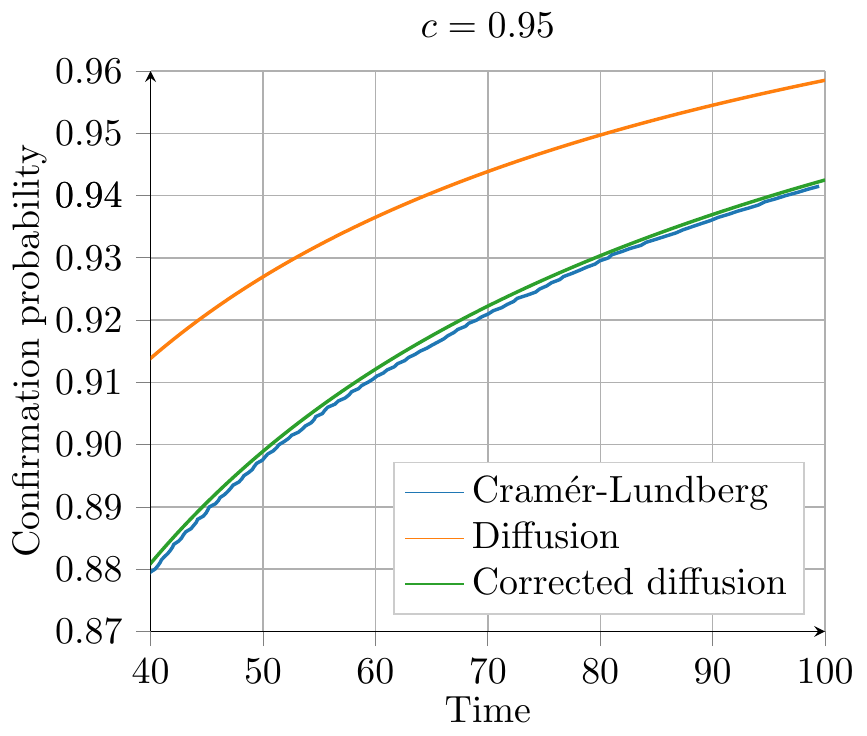}
\caption{In each graph we compare the simulated confirmation times of the Cram\'er-Lundberg process, with starting point $x=1$, to the confirmation times corresponding to the diffusion and corrected diffusion approximations. The two right-hand plots are zoomed in versions of the left-hand plots, the top two plots are for light traffic $c=0.25$ and the bottom two plots are for heavy traffic $c=0.95$.}
\label{fig:btcbmvscor}
\end{figure}

\begin{table}
\centering
\begin{tabular}{cccc}
$n$ & Sim mean & Sim st.\ dev. & Lower bound\\
\hline
4&0.000&0.000&0.000\\
5&0.370&0.005&0.371\\
6&0.811&0.004&0.809\\
7&0.956&0.002&0.954\\
8&0.991&0.001&0.990
\end{tabular}
\caption{Probability that the number of blocks until confirmation is at most $n$, where $c=0.25$ and $x=4$, based on 300,000 simulations.}
\label{tab:btclowerboundc025}

\vb
\centering
\begin{tabular}{cccc}
$n$ & Sim mean & Sim st.\ dev. & Lower bound\\
\hline
1 & 0.000 & 0.000& 0.000\\
2 & 0.593 & 0.006& 0.594\\
3 & 0.797 & 0.004& 0.762\\
4 & 0.890 & 0.003& 0.849\\
5 & 0.937 & 0.002& 0.900\\
6 & 0.962 & 0.002& 0.933\\
7 & 0.977 & 0.002& 0.954\\
8 & 0.985 & 0.002& 0.968
\end{tabular}
\caption{Probability that the number of blocks until confirmation is at most $n$, where $c=0.5$ and $x=1$, based on 300,000 simulations.}
\label{tab:btclowerboundc050}

\vb
\centering
\begin{tabular}{cccc}
$n$ & Sim mean & Sim st.\ dev. & Lower bound\\
\hline
1 & 0.000 & 0.000& 0.000\\
2 & 0.385 & 0.004& 0.385\\
3 & 0.560 & 0.005& 0.498\\
4 & 0.662 & 0.004& 0.616\\
$\vdots$&$\vdots$&$\vdots$&$\vdots$\\
18 & 0.954 & 0.002& 0.853\\
19 & 0.958 & 0.002& 0.863\\
20 & 0.962 & 0.002& 0.872
\end{tabular}
\caption{Probability that the number of blocks until confirmation is at most $n$, where $c=0.75$ and $x=1$, based on 300,000 simulations.}
\label{tab:btclowerboundc075}
\end{table}

In this section we compare the diffusion and corrected diffusion approximation to the simulated Cram\'er-Lundberg process, both in a heavy-traffic ($c=0.95$) and a light-traffic ($c=0.25$) regime. In most cases we suppose that the starting position is $x=1$: depending on your preferences and the traffic, it is reasonable in practice to pick your fee such that your initial position in the queue is of the order of one block. By Algorithm \ref{alg:sx} we determined that the correction in the Brownian motion is approximately $\E S_1 = 0.36403$ when $c=0.95$ and $\E S_1 =0.57833$ when $c=0.25$. Consider Figure \ref{fig:btcbmvscor} for the results. It is quickly observed that the corrected diffusion approximation is generally significantly more accurate than the uncorrected diffusion approximation. Moreover, it appears that in the heavy-traffic situation the diffusion approximation is more accurate than in the light-traffic situation. 

In the light-traffic situation we can resort to the lower bound algorithm, of which the bounds will be quite tight in light traffic. Due to the nature of Proposition \ref{prop:btclowerbound}, we have to compare the confirmation probability after $n$ blocks (rather than confirmation probability after a time $t$). We report the results corresponding to the lower bound in Tables \ref{tab:btclowerboundc025}, \ref{tab:btclowerboundc050} and \ref{tab:btclowerboundc075}. We considered three cases: $c=0.25$, $c=0.50$ and $c=0.75$. In Tables \ref{tab:btclowerboundc050} and \ref{tab:btclowerboundc075} we chose a starting point $x=1$, but in Table \ref{tab:btclowerboundc025} we increased the starting point to $x=4$ (with $c=0.25$ and $x=1$ most confirmations would simply be after two blocks). Note that the lower bound is indeed below the simulated values, or above it within a standard deviation. In addition, it can be seen that the lower bound becomes increasingly loose as $c$ increases. With practical applications in mind, we could say that the lower bounds are tight enough if and only if $c\leq 0.5$, especially when one is interested in confirmation probabilities within the $90\%$ to $100\%$ range. 

\newpage
\section{Further research and concluding remarks}
There are several ways to continue this research, most notably in the following two ways:
\begin{itemize}
\item In this research we assumed the existence of a single `true' mempool, but in fact each node keeps track of its own mempool. The mempools are continuously updated with each other, but in reality there is some lag before new transactions spread through the entire network. Therefore it can happen that a miner finds a block while being unaware of some of the most recent transactions, implying that they will not be included in the block even though their fees may be sufficient. This effect is hard to model and arguably quite small, and therefore we ignored it. The model should be backtested on data to find out about the practical applicability. Fortunately, transaction data and mempool data is publicly available: it can be collected by setting up a Bitcoin node.
\vspace{-0.3cm}
\item It would be interesting to solidify the mathematical fundementals of this paper by proving the following conjecture: \textit{As the traffic ($c\uparrow1$) and time are scaled appropriately, the Cram\'er-Lundberg process with deterministic jumps of fixed size, converges to a Brownian motion. As a result, the hitting time of the Cram\'er-Lundberg process converges to an inverse Gaussian distribution.}
\end{itemize}

\section*{Acknowledgements}
The author thanks Onno Boxma and Michel Mandjes for their proofreading and helpful suggestions. The research for this paper is funded by the NWO Gravitation Project NETWORKS, Grant Number 024.002.003. 

{\small
\bibliographystyle{plain}
\bibliography{biblio}}
\end{document}